\newtheorem{lemma}{Lemma}
\newtheorem{theorem}{Theorem}
\definecolor{DarkBlue}{rgb}{0,0,0.7} 
\renewcommand{\caption}[1]{\singlespacing\hangcaption{#1}\normalspacing}
\newcommand{\Z}{{\mathds Z}}
\newcommand{\bA}{\mathsf{A}}
\newcommand{\bC}{\mathsf{C}}
\newcommand{\bG}{\mathsf{G}}
\newcommand{\bT}{\mathsf{T}}
\newcommand{\C}{{\mathcal C}}
\newcommand{\E}{{\mathcal E}}
\newcommand{\M}{{\mathcal M}}
\newcommand{\N}{{\mathcal N}}
\newcommand{\T}{{\mathcal T}}
\newcommand{\x}{{\bf x}}
\newcommand{\y}{{\bf y}}
\newtheorem{cor}{Corollary} 
\newcommand{\blfootnote}[1]{%
  \begingroup
  \renewcommand\thefootnote{}\footnote{#1}%
  \addtocounter{footnote}{-1}%
  \endgroup
}
\newcommand{\one}{\mathds 1}
\newcounter{constcount}
\newcounter{numcount}
\newcounter{thmcnt}
  \let\Oldsection\section
\renewcommand{\section}{\stepcounter{thmcnt}\Oldsection}
\newcommand{\aln}[1]{\begin{align*}#1\end{align*}}
\newcommand{\al}[1]{\begin{align}#1\end{align}}
\def\Item$#1${\item $\displaystyle#1$
   \hfill\refstepcounter{equation}(\theequation)}
\newcommand{\bea}{\begin{eqnarray}}
\newcommand{\eea}{\end{eqnarray}}
\newcommand{\beas}{\begin{eqnarray*}}
\newcommand{\eeas}{\end{eqnarray*}}
\newcommand\Tex{}
\newcommand\PR[2][\Tex]{
\ifthenelse{\equal{#1}{}}{{\mathrm{Pr}}\left(#2\right)}{\ensuremath{{\mathrm{Pr}}_{#1}\left[ #2\right]}}}
\newcommand\EX[2][\Tex]{
\ifthenelse{\equal{#1}{}}{{\mathbb E}\left[#2\right]}{\ensuremath{{\mathbb E}_{#1}\left[ #2\right]}}}
\newcommand\Var[2][\Tex]{
\ifthenelse{\equal{#1}{}}{{\mathrm{Var}}\left[#2\right]}{\ensuremath{{\mathrm{Var}}_{#1}\left[ #2\right]}}}
\newcommand\defeq{\coloneqq}
\renewcommand\M{M} 
\renewcommand\N{N} 
\newcommand\len{L} 
\newcommand\vf{\mathbf{f}} 
\newcommand\cNM{c} 
\newcommand{\type}{t}
\begin{document}

\title{}

\begin{center}

{\bf{\LARGE{
Fundamental Limits of DNA Storage Systems
}}}

\vspace*{.2in}

{\large{
\begin{tabular}{cccc}
Reinhard Heckel$^{\dagger,\ast}$ & Ilan Shomorony$^{\dagger,\ast}$ & Kannan
Ramchandran$^{\dagger}$ & David N.~C.~Tse$^{\ddagger}$ \\
\end{tabular}
}}

\vspace*{.2in}

\begin{tabular}{c}
$^\dagger$University of California, Berkeley \\$^\ddagger$Stanford University
\end{tabular}

\vspace*{.2in}

\today

\vspace*{.2in}



%
%
\begin{abstract}
Due to its longevity and enormous information density,
DNA is an attractive medium for archival storage. 
In this work, we study the fundamental limits and tradeoffs of DNA-based storage systems
under a simple model, motivated by current technological constraints on DNA synthesis and sequencing.
Our model captures two key distinctive aspects of DNA storage systems:
(1) the data is written onto many short DNA molecules that are stored in 
an unordered way 
and 
(2) the data is read by randomly sampling from this DNA pool. 
Under this model, we characterize the storage capacity, and show that a simple index-based coding scheme is optimal.
\end{abstract}

\end{center}

\blfootnote{
$^*$Authors contributed equally and are listed alphabetically.
}

\vspace{-5mm}
\section{Introduction}
\label{sec:intro}




Recent years have seen an explosion in 
the demand for data storage,
posing significant challenges to current data centers and storage techniques.
This has spurred significant interest in new storage technologies beyond 
hard drives, magnetic tapes, and memory chips. 
In this context, 
DNA---the molecule that carries the genetic instructions of all living organisms---is a promising medium for archival data storage systems due to its longevity and enormous information density. 
%
%
%
%
The idea to store information on DNA is relatively new~\cite{baum_building_1995}, and has recently gained significant attention~\cite{church_next-generation_2012, goldman_towards_2013,grass_robust_2015,bornholt_dna-based_2016,erlich_dna_2016,yazdi_rewritable_2015}, due to advances in synthesis and sequencing. 
Several works \cite{church_next-generation_2012, goldman_towards_2013,grass_robust_2015,bornholt_dna-based_2016,erlich_dna_2016,yazdi_rewritable_2015} have demonstrated that writing, storing, and retrieving data on the order of megabytes
 is possible with today's technology, 
and achieves information densities and information lifetimes~\cite{grass_robust_2015} that are far beyond what state-of-the-art tapes and discs achieve. 
We refer to \cite{yazdi_survey_2015,erlich_dna_2016} for a survey and an overview of the recent advances in the area.

DNA is a long molecule made up of four nucleotides (Adenine, Cytosine, Guanine, and Thymine) and, for storage purposes, can be viewed as a string over a four-letter alphabet.
However, there are hard technological constraints for writing and reading DNA, which need to be considered in the design of a DNA storage system. 
While in a living cell DNA can consist of millions of bases, in practice it is difficult to synthesize very long strands of DNA.
In fact, all recently proposed systems 
\cite{church_next-generation_2012, goldman_towards_2013,grass_robust_2015,bornholt_dna-based_2016,erlich_dna_2016,yazdi_rewritable_2015} 
stored information on DNA molecules no longer than a few hundred nucleotides. 
Moreover, the process of determining the order of nucleotides in a DNA molecule, or sequencing, suffers from similar length constraints.
State-of-the-art sequencing platforms such as Illumina cannot sequence DNA segments longer than a few hundred nucleotides.
While recently developed ``third-generation'' technologies, such as 
Pacific Biosciences and Oxford Nanopore, can provide reads that are several thousand bases long, their error rates and reading costs are significantly higher.
Moreover, long reads are an overkill for storage systems in view of the current synthesizable lengths.




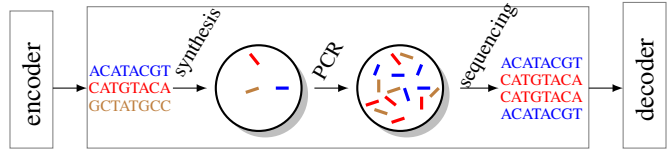
\begin{figure}[h!]
\begin{center}
\begin{tikzpicture}[scale=0.57,>=latex]
\draw[gray] (-2.8,-1.4) rectangle (-1.8,1.8);
\node at (-2.3,0.1) [rotate=90] {\small encoder};
\draw[->] (-1.8,0) -- (-1,0);
\draw[gray] (-1,-1.4) rectangle (10.7,1.9);
\draw[->] (10.7,0) -- (11.5,0);
\draw[gray] (11.5,-1.4) rectangle (12.5,2);
\node at (12,.2) [rotate=90] {\small decoder};

\node at (-0,0.4) {\tiny \textcolor{blue}{ACATACGT}};
\node at (-0,0) {\tiny \textcolor{red}{CATGTACA}};
\node at (-0,-0.4) {\tiny \textcolor{brown}{GCTATGCC}};

\draw[->] [->] (1,0) to (1.8,0);
\node at (1.5,0) [above,rotate=60] {\hspace{1.1cm}\scriptsize synthesis};

\begin{scope}[xshift=-10cm,yshift=3cm]
\draw[thick, drop shadow, fill=white] (13,-3) circle (1cm);
\draw[very thick, blue] (13.4,-3) to (13.7,-3);
\draw[very thick, red] (13,-2.45) to (12.8,-2.2);
\draw[very thick, brown] (12.7,-3.1) to (13.0,-3);
\end{scope}

\draw[->] (4.3,0) to node[above,swap,rotate=60] {\hspace{0.6cm}\scriptsize PCR
} (5.1,0);

\begin{scope}[xshift=-6.7cm,yshift=3cm]
\draw[thick, drop shadow, fill=white] (13,-3) circle (1cm);
\draw[very thick, blue] (13.1,-3) to (13.2,-3.3);
\draw[very thick, blue] (13.4,-3) to (13.7,-3);
\draw[very thick, blue] (13.4,-2.7) to (13.5,-2.45);
\draw[very thick, blue] (12.4,-2.7) to (12.5,-2.45);
\draw[very thick, blue] (12.8,-2.7) to (13.1,-2.7);
\draw[very thick, red] (12.8,-3.8) to (13.1,-3.7);
\draw[very thick, red] (12.2,-3.4) to (12.5,-3.3);
\draw[very thick, red] (12.9,-3.4) to (12.65,-3.2);
\draw[very thick, red] (13,-2.45) to (12.8,-2.2);
\draw[very thick, red] (13.5,-3.2) to (13.5,-3.5);
\draw[very thick, brown] (12.5,-3.1) to (12.5,-2.8);
\draw[very thick, brown] (12.7,-3.1) to (13.0,-3);
\draw[very thick, brown] (13.7,-3.2) to (13.9,-3);
\draw[very thick, brown] (13,-2.2) to (13.3,-2.3);
\draw[very thick, brown] (12.4,-3.5) to (12.7,-3.6);
\end{scope}

\draw[->] [->] (7.7,0) to (8.6,0);
\node at (8.3,0) [above,rotate=60] {\hspace{1cm}\scriptsize sequencing};

\begin{scope}[xshift=-1.1cm]
\node at (10.7,0.6) {\tiny \textcolor{blue}{ACATACGT}};
\node at (10.7,0.2) {\tiny \textcolor{red}{CATGTACA}};
\node at (10.7,-0.2) {\tiny \textcolor{red}{CATGTACA}};
\node at (10.7,-0.6) {\tiny \textcolor{blue}{ACATACGT}};
\end{scope}
\end{tikzpicture}
\end{center}
\caption{
\label{fig:channelmodel}
Channel model for DNA storage systems: the input to the channel is a multi-set of $\M$ length-$\len$ DNA molecules, while the output is a multi-set of $\N$ draws from the pool of DNA molecules.
}
\vspace{-2mm}
\end{figure}

Given these constraints, a simple model, illustrated in Fig.~\ref{fig:channelmodel}, that captures some of the main differences between a DNA storage system and conventional storage systems is as follows.
The data is written on $\M$ DNA molecules, each of length $\len$. 
Accessing the information is done via state-of-the-art sequencing technologies (including Illumina and third-generation sequencing technologies such as nanopore sequencing).
This corresponds to randomly sampling and reading molecules from the DNA pool. 
Since in practice sequencing is preceded by PCR amplification, which replicates each DNA molecule in the pool by roughly the same amount, we model the reading process as drawing $\N$ times uniformly at random, with replacement, from the $\M$ DNA molecules. 
%
The decoder's goal is to reconstruct the information from the set of $\N$ reads. 
Note that the decoder has no information about which molecules were sampled, and unless $\N = \Omega(\M \log \M)$, a proportion of the original DNA fragments is never sampled. 
Practical limitations of this model are discussed in the Concluding Remarks.


{\bf Contributions: }
In this paper we study the fundamental limits of the simple DNA storage model outlined above.
Our analysis aims to reveal the basic relationships and tradeoffs between key design parameters and performance goals such as storage density and reading/writing costs.
Motivated by the low error rate of Illumina sequencers and the existence of error-correcting codes tailored to the sequencing channel \cite{gabrys_asymmetric_2015}, we make the simplifying assumption that each sampled molecule is read in an error-free way.
We let the number of reads be $\N = \cNM \M$, and consider 
the asymptotic regime where $\M \to \infty$. 
The main parameter of interest is the storage capacity $C_s$, defined as the maximum number of bits that can be reliably stored per nucleotide (the total number of nucleotides is $\M\len$). 
Our main result shows that if $\lim_{\M \to \infty} \frac{\len}{\log \M} = \beta >1$, then
\[
C_s = (1 - e^{-\cNM}) (1- 1/\beta) \log(4).
\]
If $\lim_{\M \to \infty} \frac{\len}{\log \M} < 1$, no positive rate is achievable. 
The factor $1 - e^{-\cNM}$ can be  understood as the loss due to unseen molecules, and $1-1/\beta$ corresponds to the loss due to the unordered fashion of the reading process.


One important implication of this result is that a simple index-based scheme (as common in the literature) is optimal; i.e., prefixing each molecule with a unique header incurs no rate loss. 
More specifically, our result shows that indexing each DNA molecule and employing an optimal erasure code across the molecules
is information-theoretically optimal. 

Interestingly, as a bonus, the molecule's indices can be used as a primer to allow for random access via selectively sequencing of the DNA molecules, as proposed in \cite{yazdi_rewritable_2015}. 
Our result thus suggests that this index-based random access imposes no loss to the storage rate.
Furthermore, our expression for the capacity suggests that practical systems should not operate at a very high coverage depth $c = N/M$, as this significantly increases the time and cost of reading, and provides little storage gains. 
In particular, if $\M$ is large, it is wasteful to operate in the $\N = \Omega(\M \log \M )$ regime where each sequence is read at least once, as this only gives a marginally better storage capacity, while the sequencing costs can be exorbitant.

{\bf Related literature: }
The capacity of a DNA storage system under a related model has been studied in an unpublished manuscript by MacKay, Sayer, and Goldman~\cite{mackay_near-capacity_2015,sayir_challenge_2016}, which was brought to our attention after submission of the current work. 
In the model in~\cite{mackay_near-capacity_2015}, the input to the channel consists of a (potentially arbitrarily large) set of DNA molecules of fixed length $\len$, which is not allowed to contain duplicates. The output of the channel are $\M$ molecules drawn with replacement from that set. 
The approach in \cite{mackay_near-capacity_2015} considers coding over repeated independent storage experiments, and computes the single-letter mutual information over one storage experiment. This indicates that the price of not knowing the ordering of the molecules is logarithmic in the number of synthesized molecules, similar to our main result.

The capacity of a DNA storage system under a different model was studied in \cite{erlich_dna_2016}. 
Specifically \cite{erlich_dna_2016} assumes that each DNA segment is indexed which reduces the channel model to an erasure channel. 
While this assumption removes the key aspects that we focus on in this paper, namely that DNA molecules are stored in an unordered way and read via random sampling, \cite{erlich_dna_2016} considers other important constraints, such as homopolymer limitations.

Several recent works have designed coding schemes for DNA storage systems based on this general model, some of which were implemented in proof-of-concept storage systems \cite{church_next-generation_2012,goldman_towards_2013,grass_robust_2015,bornholt_dna-based_2016,erlich_dna_2016}. 
Several papers have studied important additional aspects of the design of a practical DNA storage system.
Some of these aspects include DNA synthesis constraints such as sequence composition  \cite{kiah_codes_2016,yazdi_rewritable_2015,erlich_dna_2016}, the asymmetric nature of the DNA sequencing error channel \cite{gabrys_asymmetric_2015}, 
the need for codes that correct insertion errors \cite{sala_insertions_2016}, 
and the need for techniques to allow random access \cite{yazdi_rewritable_2015}.
The use of fountain codes for DNA storage was considered in both \cite{erlich_dna_2016} and \cite{mackay_near-capacity_2015}. 

{\bf Outline: }
The paper is organized as follows.
After formalizing the problem setting in Section~\ref{sec:problem}, we prove the main result in Section~\ref{sec:storagecap}.
We then present the Storage-Recovery tradeoff that follows as a consequence in Section~\ref{sec:tradeoff}, and provide some discussion and future directions in Section~\ref{sec:discussion}.

\section{Problem Setting}
\label{sec:problem}


An $(\M,\len,\N)$ DNA storage code $\C$ is a set of codewords  
each of which is a multi-set of $M$ strings of length $\len$,
 together with a decoding procedure. 
The alphabet $\Sigma$ is typically $\{\bA,\bC,\bG,\bT\}$, corresponding to the four nucleotides that compose DNA. 
However, to simplify the exposition we focus on the binary case $\Sigma = \{0,1\}$, and we note that the results can be extended to a general alphabet in a straightforward manner. 
A codeword $\{\x_1,...,\x_M\} \in \C$ is stored as a physical mixture of $M$ synthesized DNA molecules of length $\len$. 
Throughout the paper we use the word molecule to refer to each of the stored strings of length $\len$ over the alphabet $\Sigma$. 
\begin{figure}[ht] 
	\center
       \includegraphics[width=6cm]{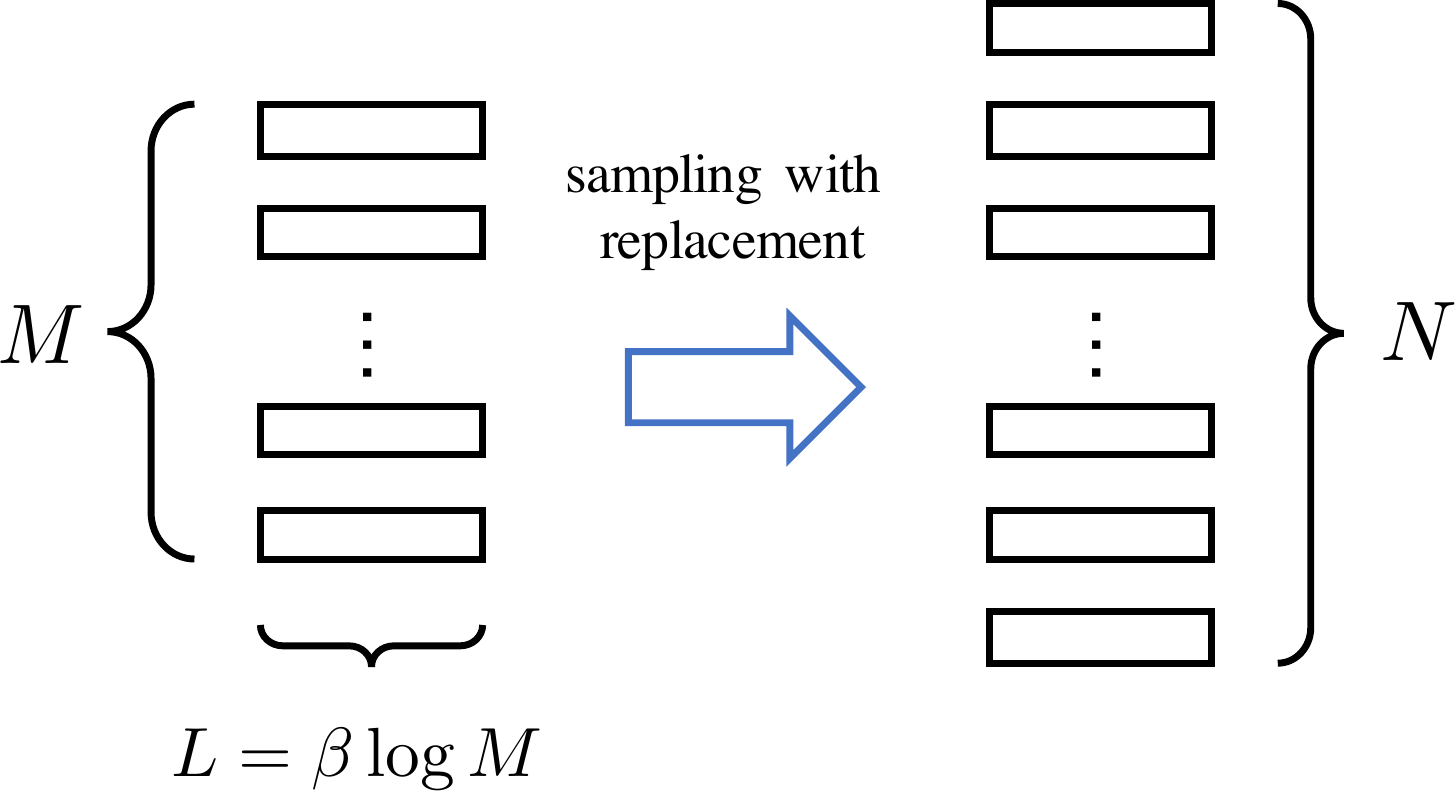} 
       \caption{Model for a DNA storage system.\label{fig:diagram}}
\end{figure}

In the recovery phase, the decoder draws $\N$ samples independently and uniformly at random, with replacement, from the set of stored molecules $\x_1,...,\x_\M$, obtaining the multi-set $\{\y_1,...,\y_N\}$.
A decoding function then maps $\{\y_1,...,\y_N\}$ to a message index in $\{1,\ldots,|\C|\}$.
We let $\cNM \defeq N/M$ be the coverage depth; i.e., the expected number of times each molecule $\x_i$ is sampled. 

The main parameter of interest of a DNA storage system is the storage density, or the storage rate, 
defined as the number of bits written per DNA base synthesized, i.e., 
\al{
R_s \defeq \frac{\log|\C|}{\M\len}. \label{eq:Rs}
}
Due to the nature of the reading process, via sequencing, another parameter of interest is the recovery rate, defined as the number of bits recovered per DNA base sequenced: 
\al{
R_r \defeq \frac{\log|\C|}{\N\len}. \label{eq:Rr}
}
We consider an asymptotic regime where the coverage depth $\cNM$ is fixed
and $M \to \infty$.
We will let $\len \defeq \beta \log M$ for some fixed $\beta $.
As our main result will show, $L = \Omega(\log M)$ is the asymptotic regime of interest for this problem.
We say that the rates $(R_s,R_r)$ are achievable if there exists a sequence of DNA storage codes $\C_\M$ with rates $(R_s,R_r)$ such that the decoding error probability tends to $0$ as $\M \to \infty$. 

\section{Storage Capacity}
\label{sec:storagecap}


Our main result is the characterization of the storage capacity, given by the following theorem.

\begin{theorem} \label{thm:storagecap}
The storage rate $R_s$ is achievable if and only if 
\al{
R_s  & \leq (1-e^{-\cNM})\left(1 - 1/\beta \right). \label{eq:storagecap}
}
In particular, if $\beta \leq 1$, no positive rate is achievable.
\end{theorem}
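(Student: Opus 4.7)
The plan is to prove achievability and converse separately. For achievability, I would use a simple indexing scheme: prefix each of the $M$ molecules with a unique header of length $\lceil \log M \rceil$ bits, leaving $L - \log M = L(1-1/\beta)$ bits of payload per molecule. Across the $M$ molecules, apply a near-MDS erasure code (e.g., Reed--Solomon over a large alphabet, or a random linear code) with each molecule's payload acting as one code symbol. A standard balls-into-bins argument shows that among the $N = cM$ reads, the number of \emph{distinct} molecules sampled concentrates around $M(1 - e^{-c})$. Since the headers allow the decoder to identify which molecules are missing, the channel effectively becomes an erasure channel with erasure probability $e^{-c}$, over which the outer code succeeds as long as at least $(1 - e^{-c} - \delta)M$ molecules are seen. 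Taking $\delta \to 0$ gives the achievable rate $(1-e^{-c})(1-1/\beta)$.

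For the converse, I would start from Fano's inequality to reduce the task to upper-bounding $I(W;\{\y_1,\ldots,\y_N\})$. The key observation is that, without loss of optimality, the $M$ molecules in any codeword can be assumed pairwise distinct, since duplicates only shrink the support of the output without adding information. The decoder's view then factors into the set $S \subseteq \{0,1\}^L$ of distinct sampled molecules and the multiplicity vector. I would bound $H(S) \leq H(|S|) + \log\binom{2^L}{\mathbb{E}[|S|]} + o(ML)$ using concavity of $s \mapsto \log\binom{2^L}{s}$ together with concentration of $|S|$ around $M(1-e^{-c})$, then invoke Stirling to write $\log\binom{2^L}{s} = s(L - \log s) + O(s) = s(L-\log M) + O(s)$ when $s = \Theta(M)$. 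Plugging in $s \approx M(1-e^{-c})$ and normalizing by $ML$ yields $R_s \leq (1-e^{-c})(1-1/\beta) + o(1)$. For $\beta \leq 1$, the bound $\log\binom{2^L}{|S|} \leq M L$ together with $L \leq \log M$ forces $R_s \to 0$, giving the second part of the claim.

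The main obstacle is the converse, specifically the passage from the ordered multiset observation $\{\y_1,\ldots,\y_N\}$ to the unordered set $S$ of distinct draws without losing a constant factor. One must handle potential near-collisions carefully: if two codeword molecules $\x_i, \x_j$ coincide, the multiplicity profile in the output can in principle carry extra bits. I would control this by showing that, conditional on $S$, the entropy of the multiplicity vector is $O(M)$ (hence $o(ML)$), using the multinomial concentration of the sample counts around $c$. A secondary subtlety is establishing concentration of $|S|$ uniformly over codebooks of size up to $2^{ML}$; this follows from a Chernoff / Poissonization argument on the number of nonempty bins in the balls-into-bins process, which is independent of the codeword structure.
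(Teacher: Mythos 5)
Your achievability argument (prefix index plus near-MDS erasure code across payloads, with concentration of the number of distinct draws) is the same as the paper's. The converse, however, takes a genuinely different route, and it is worth contrasting the two.

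The paper's converse introduces a \emph{genie-aided} channel: before sampling, the genie appends a unique random tag of length $\log M$ to each molecule. Because tags disambiguate whether two equal outputs $\y_i=\y_j$ came from the same stored molecule or from two identical ones, the set of distinct tagged outputs is a sufficient statistic, and after stripping the (independent, random) tags the relevant observable collapses to a single \emph{type vector} $\vf\in\Z_+^{M^\beta}$ counting how many distinct observed molecules carry each content. The converse is then pure type-counting (Lemma~\ref{lem:comb}) together with the coupon-collector concentration $\|\vf\|_1 \le (1-e^{-c}+\delta)M$ (Lemma~\ref{lem:conc}). The cleverness is that the genie makes sufficiency clean and pushes all the ``ambiguity'' into a single counting step. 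Your proposal instead works directly with the raw output multiset, splits it into the content set $S$ and the multiplicity profile, bounds $H(S)$ by $\log\binom{2^L}{|S|}$ plus the $|S|$-concentration, and argues $H(\text{mult}\mid S)=O(M)=o(ML)$ because the number of compositions of $N=cM$ into $|S|$ positive parts is at most $\binom{N-1}{|S|-1}\le 2^{cM}$. This is valid and reaches the same constant; both routes ultimately rely on (i) a balls-into-bins concentration bound that is codeword-independent because the number of distinct \emph{indices} sampled stochastically dominates the number of distinct contents, and (ii) $\log\binom{2^L}{s}\approx s(L-\log M)$ (equivalently $\binom{2^L+s-1}{s}$, same asymptotics). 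What the genie buys is that you never have to argue separately that the multiplicity vector is negligible; what your route buys is that you never have to introduce an auxiliary channel and verify it only helps the decoder.

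One caveat you should fix: the sentence claiming that ``without loss of optimality, the $M$ molecules in any codeword can be assumed pairwise distinct'' is not a valid reduction for a converse, since a converse must hold for \emph{every} code. Fortunately, your argument does not actually use this: the decomposition of the output into $(S,\text{mult})$, the entropy bounds, and the concentration of $|S|$ (via stochastic domination by the number of distinct sampled indices) all hold for codewords with repeated molecules as well. You even acknowledge this yourself when you later say repeated molecules ``can in principle carry extra bits'' and promise to control them via the multiplicity-entropy bound. So the claim should simply be dropped; the correct statement is that the output distribution depends on the codeword only through its type vector, and the entropy bound is what controls rate, not a reduction to distinct-molecule codes.
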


%




The capacity expression in (\ref{eq:storagecap}) can be intuitively understood through the achievability argument.
A storage rate of $R_s = (1-e^{-\cNM})\left(1 - 1/\beta \right)$ can be easily achieved by prefixing all the molecules with a distinct tag, which effectively converts the channel to an erasure channel. 
More precisely, we use the first $\log \M$ bits of each molecule to encode a distinct tag. 
Then we have $\len - \log \M = \len(1 - 1/\beta)$ symbols left per molecule to encode data. 
The decoder can use the tags to remove duplicates and sort the molecules that are sampled. 
This effectively creates an erasure channel, where a molecule is erased if it is not sampled in any of the $N$ attempts, which occurs with probability 
 $(1 - 1/\M)^{\M \cNM}$.
Since $\lim_{\M\to\infty} (1 - 1/\M)^{\M \cNM} = e^{-\cNM}$, the storage rate $R_s$ achieved is $(1-e^{-\cNM})(1-1/\beta)$.
The surprising aspect of Theorem~\ref{thm:storagecap} is that this simple index-based scheme is optimal.


\subsection{
\label{sec:motivationconverse}
Motivation for Converse}

A simple outer bound can be obtained by considering a genie that provides the decoder with the index of 
each sampled molecule.
In other words, $\{\x_1,\ldots,\x_\M\}$ are the stored molecules, and the decoder observes $\{ (\y_1,i_1),(\y_2,i_2),\ldots,(\y_\N,i_\N) \}$ where $i_j$ is such that $\y_j = \x_{i_j}$.
This converts the channel into an erasure channel with erasure probability $(1 - 1/\M)^{\M \cNM}$, 
and taking the limit, 
\al{
R_s \leq 1 - e^{-\cNM} \label{eq:simplebound1}
}
follows.
It is intuitive that the bound~\eqref{eq:simplebound1} should not be achievable, as the decoder in general cannot sort the molecules and create an effective erasure channel.
However, it is not clear a priori that prefixing every molecule with an index is 
optimal.

Notice that one can view the DNA storage channel as a channel where
the encoder chooses a distribution (or a type) over the alphabet $\Sigma^\len$
and the decoder takes $N$ samples from this distribution.
From this angle, the question becomes ``how many types $\type \in \Z_+^{2^L}$ with $\|\type \|_1 = M$ can be reliably decoded from $N$ independent samples?'', and restricting ourselves to index-based schemes restricts the set of types to those with $\|\type \|_\infty = 1$; i.e., no duplicate molecules are stored.

While this restriction may seem suboptimal, a counting argument suggests that it is not.
The number of types for a sequence of length $M$ over an alphabet of size $|\Sigma^\len| = 2^L$ is at most $M^{2^L}$ and thus at most
\aln{
\frac{1}{ML} \log M^{2^L} = \frac{2^L \log M}{M \beta \log M}  = \frac{2^L}{\beta M} 
}
bits can be encoded per symbol.
We conclude that, if $\lim_{M \to \infty} L/\log M < 1$, $R_s = 0$.
An actual bound on $R_s$ can be obtained by counting the number of types more carefully.
This is done in the following lemma, which we prove in the appendix.
\begin{lemma} \label{lem:comb}
The number of distinct vectors $\type \in \Z_+^{a}$ with $\|\type \|_1 = b$ is given by
\aln{
\T[a,b] \defeq {a+b-1 \choose b} < \left( \frac{e(a+b-1)}{b} \right)^b.
}
\end{lemma}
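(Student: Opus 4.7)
The plan is to prove the two assertions of the lemma separately: first the exact enumeration $\T[a,b] = \binom{a+b-1}{b}$, and then the bound $\binom{a+b-1}{b} < \left(e(a+b-1)/b\right)^b$.

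For the enumeration, I would invoke the classical ``stars and bars'' correspondence. A vector $\type \in \Z_+^{a}$ with $\|\type\|_1 = b$ is an assignment of $b$ indistinguishable units across $a$ distinguishable bins, with bin $i$ receiving $\type_i$ units. Such assignments are in bijection with binary strings of length $a+b-1$ containing exactly $b$ ``stars'' (units) and $a-1$ ``bars'' (bin separators), and the number of these strings is $\binom{a+b-1}{b}$. I would spell out the bijection in one sentence and then move on.

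For the inequality, the plan is to apply the standard upper bound $\binom{n}{k} \le (en/k)^k$ with $n = a+b-1$ and $k = b$. That bound follows from the observation
\[
\binom{n}{k} \;\le\; \frac{n^k}{k!} \;\le\; \frac{n^k}{(k/e)^k} \;=\; \left(\frac{en}{k}\right)^k,
\]
where the first step uses $\binom{n}{k} \le n^k/k!$ and the second step uses the well-known estimate $k! \ge (k/e)^k$, which in turn is immediate from the Taylor series $e^k = \sum_{j\ge 0} k^j/j! \ge k^k/k!$. Substituting $n=a+b-1$ and $k=b$ gives the claimed bound, and the inequality is strict because the first inequality above is strict whenever $k \ge 2$ (and the case $k=1$ can be checked directly).

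Neither step poses a real obstacle: the first is a textbook identity and the second is a routine factorial estimate. The only minor care is to state the bound as strict ``$<$'' rather than ``$\le$''; this is easy to justify from the strict inequality in the Taylor series step. I would therefore keep the proof to a few lines and not belabor the routine algebra.
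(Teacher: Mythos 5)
Your proof is correct and takes essentially the same route as the paper: the count is established by the same stars-and-bars bijection with binary strings having $a-1$ zeros and $b$ ones, and the upper bound is the same standard binomial estimate $\binom{n}{k} \le (en/k)^k$. The only difference is that the paper simply invokes this bound as ``standard,'' whereas you supply the two-line derivation via $\binom{n}{k} \le n^k/k!$ and $k! \ge (k/e)^k$; that is a harmless (and arguably welcome) amount of extra detail.
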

Since our types are vectors $\type \in \Z_+^{2^L}$ with $\|\type \|_1 = M$, and $2^L  = 2^{\beta \log M} = M^\beta$, it follows that at most 
\aln{
\frac{1}{ML} \log \left( \frac{e(M^\beta+M-1)}{M} \right)^{M} \leq \frac{M \log \alpha M^{\beta-1}}{M \beta \log M}
}
bits can be encoded per symbol, for some $\alpha > 1$, and 
\al{
R_s \leq 1- 1/\beta.  \label{eq:simplebound2}
}
Therefore, if we had a deterministic channel where the decoder  observed \emph{exactly} the $M$ stored molecules, an index-based approach would be optimal from a rate standpoint.
The converse presented in the next section utilizes a more careful genie to show
that the bounds in  (\ref{eq:simplebound1}) and (\ref{eq:simplebound2}) can in fact be combined, implying the optimality of index-based coding approaches.

\subsection{Converse}

\newcommand\set{\mathrm{set}}

Let $\x_i$, $i=1,\ldots,\M$, be the $\M$ length-$\len$ molecules written into the channel 
and $\y_i$, $i=1,\ldots,\N  = \cNM \M$, be the length-$\len$ molecules observed by the decoder. 
Notice that, since the sampling is done with replacement, whenever the channel output is such that $\y_i = \y_j$ for $i \ne j$, the decoder cannot determine whether both $\y_i$ and $\y_j$ were sampled from the same molecule $\x_\ell$ or from two different molecules that obey $\x_\ell = \x_k, \ell \neq k$. 
In order to derive the converse, we consider a genie-aided channel that removes this ambiguity. 
\begin{figure}[h] 
\vspace{0mm}
	\center
       \includegraphics[width=9cm]{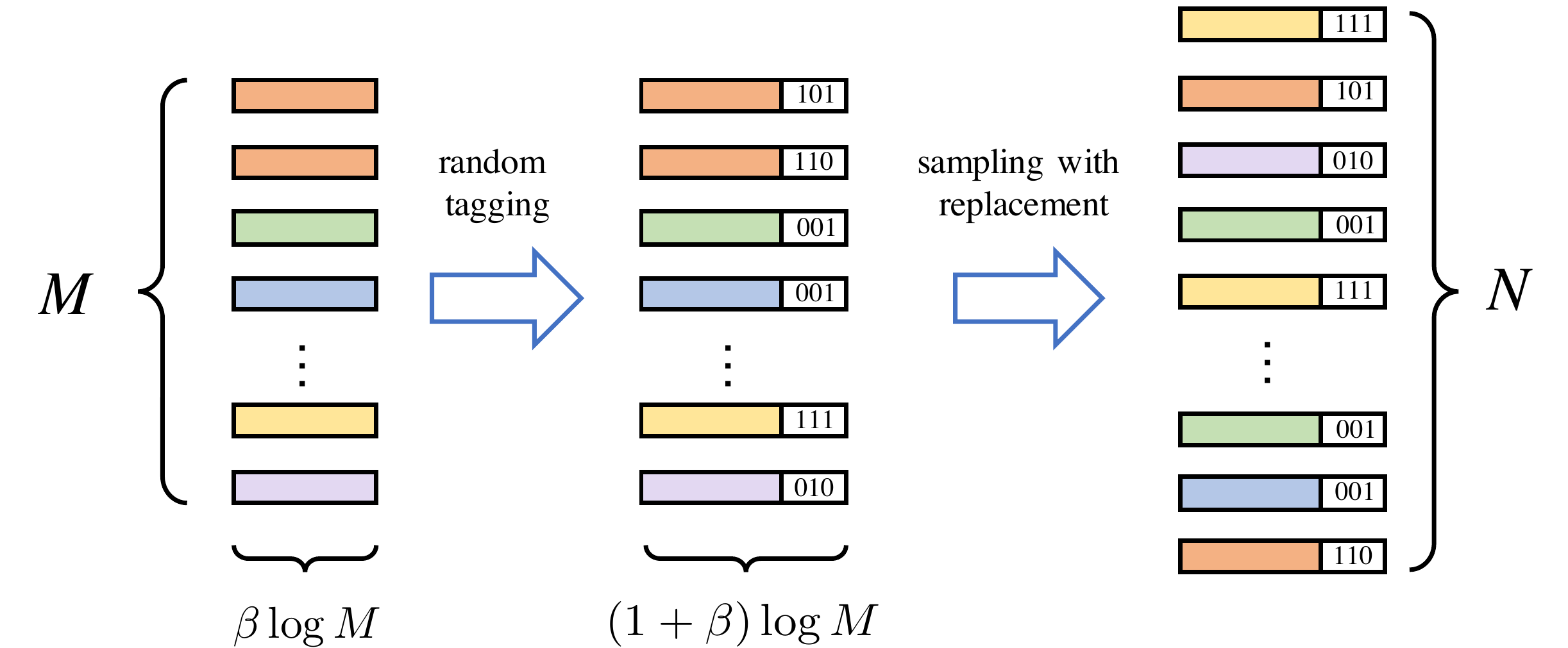} 
       \caption{Genie-aided channel for converse.\label{fig:geniechannel}}
\end{figure}
As illustrated in Fig.~\ref{fig:geniechannel}, before sampling the $\N$ molecules, the genie-aided channel appends a unique index of length $\log \M$ to each molecule $\x_i$, which results in the set of tagged molecules $\{\tilde \x_i\}_{i=1}^\M$. 
We emphasize that those indices are all unique, and are chosen randomly and independently from the input sequences $\{\x_i\}_{i=1}^\M$. 
Notice that, in contrast to the naive genie discussed in Section~\ref{sec:motivationconverse}, this genie does \emph{not} reveal 
the index $i$ of the molecule $\x_i$ from which $\y_\ell$ was sampled.
Therefore, the channel is \emph{not} reduced to an erasure channel, 
and intuitively the indices are only useful for the decoder to determine whether two equal samples $\y_\ell = \y_k$ came from the same molecule or from distinct molecules. 

The output of the genie-aided channel, denoted by $\{\tilde \y_i \}_{i = 1}^\N$, is then obtained by sampling $\N$ times with replacement from the set of tagged molecules $\{\tilde \x_i\}_{i=1}^\M$. 
It is clear that any storage rate $R_s$ achievable in the original channel can be achieved on the genie-aided channel, as the decoder can simply discard the indices, or stated differently, the output of the original channel can be obtained from the output of the genie-aided channel. 

Next, let $\set( \{\tilde \y_i \}_{i = 1}^\N )$ be the set obtained from $\{\tilde \y_i \}_{i = 1}^\N$ by removing any duplicates. 
Then $\set( \{\tilde \y_i \}_{i = 1}^\N )$ is a sufficient statistic for the parameter $\{\x_i\}_{i=1}^\M$ since all tagged molecules are distinct objects, and sampling the same molecule $\tilde \x_i$ does not yield additional information on $\{\x_i\}_{i=1}^\M$. 
More formally, conditioned on $\set( \{\tilde \y_i \}_{i = 1}^\N )$, $\{\x_i\}_{i=1}^\M$ is independent of the genie's channel output $\{\tilde \y_i \}_{i = 1}^\N$. 

Next, we define the frequency vector $\vf \in \Z_+^{\M^{\beta}}$ 
(note that $|\Sigma^{\len}| = 2^{\beta \log \M} = \M^\beta$) 
obtained from $\set( \{\tilde \y_i \}_{i = 1}^\N )$ in the following way. 
The $\y$-th entry of $\vf$, $\y \in \Sigma^{\len}$, is given by 
\aln{
\vf[\y] 
\defeq \left| \left\{ \tilde \y \in \set( \{\tilde \y_i \}_{i = 1}^\N) 
\colon  \tilde \y[1\colon \len] = \y \right\} \right|. 
}
Since $\set( \{\tilde \y_i \}_{i = 1}^\N)$ is a sufficient statistic for $\{\x_i\}_{i=1}^\M$
and the tags added by the genie were chosen at random and independently of $\{\x_i\}_{i=1}^\M$, it follows that $\vf$ is also a sufficient statistic for $\{\x_i\}_{i=1}^\M$. 
Hence, we can view the (random) frequency vector $\vf$ as as the output of the channel without any loss. 

Notice that $|\set( \{\tilde \y_i \}_{i = 1}^\N )| =  \| \vf \|_1$, and in expectation we have 
$\EX{ \| \vf \|_1 / \M } = (1 - (1 - 1/\M)^{\M \cNM})$ 
which converges to
$1 - e^{-\cNM}$ for $\M \to \infty$. 
Furthermore, the following lemma, which we prove in the appendix, asserts that $ \| \vf \|_1$ does not exceed its expectation by much.

\begin{lemma} \label{lem:conc}
For any $\delta > 0$, the frequency vector $\vf$ at the output of the genie-aided channel satisfies
\aln{
\Pr\left( \frac{\| \vf \|_1}{\M} > 1-e^{-\cNM} + \delta \right) \to 0, \text{ as } \M \to \infty.
}
\end{lemma}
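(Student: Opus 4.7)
The plan is to reduce the statement to a standard balls-in-bins concentration fact. Observe that each tagged molecule $\tilde{\x}_i$ is distinct by construction, so $|\set(\{\tilde\y_i\}_{i=1}^N)|$ is exactly the number of distinct indices drawn in $N = c M$ i.i.d.\ uniform draws from $[\M]$. Hence $\|\vf\|_1 = \sum_{i=1}^\M X_i$, where $X_i$ is the indicator that index $i$ is hit at least once. This is the classical ``number of nonempty bins'' random variable for $N$ balls thrown into $\M$ bins.

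The expectation is immediate: $\EX{X_i} = 1 - (1-1/\M)^{c\M}$, so
\[
\frac{\EX{\|\vf\|_1}}{\M} \;=\; 1 - (1-1/\M)^{c\M} \;\longrightarrow\; 1 - e^{-c}.
\]
In particular, for $\M$ large enough one has $\EX{\|\vf\|_1}/\M < 1 - e^{-c} + \delta/2$, so it suffices to show that $\|\vf\|_1$ concentrates around its mean to within $\delta \M /2$ with probability tending to $1$.

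For concentration I would use McDiarmid's bounded-differences inequality applied to the $N$ independent sample indices $I_1,\ldots,I_N$: resampling a single $I_j$ changes the number of distinct indices by at most $1$. Thus
\[
\Pr\!\left( \big|\|\vf\|_1 - \EX{\|\vf\|_1}\big| > \tfrac{\delta}{2}\M \right)
\;\leq\; 2 \exp\!\left( - \frac{2 (\delta \M/2)^2}{N} \right)
\;=\; 2 \exp\!\left( - \frac{\delta^2 \M}{2c} \right),
\]
which tends to $0$ as $\M\to\infty$. Combining with the expectation bound finishes the proof. (Alternatively, one can use Chebyshev after noting that $\Cov(X_i,X_j) = (1-2/\M)^N - (1-1/\M)^{2N} \leq 0$, so $\Var(\|\vf\|_1) \leq \M/4$, giving the weaker but still sufficient bound $O(1/(\delta^2 \M))$.)

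There is no real obstacle here; the only thing to be careful about is the reduction step, namely explicitly using that the genie's tags are distinct so that ``distinct tagged outputs'' equals ``distinct input indices sampled.'' Once that observation is in place, the problem collapses to a textbook concentration bound for the coupon-collector-style count, and either Chebyshev (via the easy negative-correlation calculation) or McDiarmid works in a few lines.
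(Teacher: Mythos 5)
Your proof is correct, and it takes a genuinely different route from the paper's. Both arguments start from the same reduction---because the genie's tags make all $\M$ molecules distinct, $\|\vf\|_1$ is exactly the number of distinct coupons seen after $\N = \cNM\M$ uniform draws from $[\M]$---but the concentration step differs. The paper works with the dual coupon-collector stopping time $T=\sum_{i=0}^{\alpha\M-1} t_i$ (number of draws to collect $\alpha\M$ distinct coupons), uses $\Pr(Q\ge\alpha\M)=\Pr(T\le\N)$, and then applies Chebyshev to $T$, which is a sum of \emph{independent} geometric variables whose mean and variance can be computed in closed form via harmonic numbers. You instead work directly with the occupancy count $\|\vf\|_1=\sum_i X_i$ and apply McDiarmid's bounded-differences inequality to the $\N$ independent sample indices; the bounded-difference constant is $1$ since changing one draw changes the distinct count by at most $1$, giving a clean $2\exp(-\delta^2\M/(2\cNM))$ bound. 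Your Chebyshev alternative via the negative correlation $\Cov(X_i,X_j)=(1-2/\M)^N-(1-1/\M)^{2N}\le 0$ is also correct and gives $\Var\le\M/4$. Both of your routes are shorter and arguably cleaner than the paper's; the McDiarmid route yields an exponentially small tail where the paper only proves an $O(1/\M)$ rate, though of course either suffices for the lemma as stated. The paper's choice of the stopping-time route is presumably just a matter of taste (it sidesteps the dependence among the $X_i$ by recasting the problem in terms of independent summands), not a necessity.
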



We now append the coordinate  
$
F_0 = (1-e^{-\cNM}+ \delta) \M - \| \vf \|_1
$
to the beginning of $\vf$ to construct $\vf' = (F_0,\vf)$.
Notice that when $\| \vf \|_1 \leq (1-e^{-\cNM} + \delta)\M$ (which by Lemma~\ref{lem:conc} happens with high probability), we have $\| \vf' \|_1 = (1-e^{-\cNM} + \delta)\M$.

Fix $\delta > 0 $, and define the event $\E =  \{ \| \vf \|_1 > (1-e^{-\cNM} + \delta)\M \}$ with indicator function $\one_\E$.
By Lemma~\ref{lem:conc}, $\PR{\E} \to 0$ as $\M \to \infty$. 
Consider a sequence of codes $\{\C_\M\}$ with rate $R_s$ and vanishing error probability.
If we let $W$ be the message to be encoded, chosen uniformly at random from $\{1,\ldots,2^{\M \len R_s}\}$,
from Fano's inequality we have
\al{
\M \len R_s  
&= 
H(W)
=
I(W; \vf') + H(W| \vf') \nonumber \\
&\leq H(\vf') + 1 + P_e \M \len R_s,
}
where $P_e$ is the probability of a decoding error, which by assumption goes to zero as $M \to \infty$. 
We can then
upper bound the achievable storage rate $R_s$ as
\al{
\M & \len R_s (1-P_e) 
\leq H (\vf' )  +1 \leq  H\left( \vf', \one_\E \right) +1 \nonumber \\
&\leq \PR{ \E } H\left( \left. \vf'\, \right|  \E  \right) + \PR{ \bar \E } H \left( \left. \vf' \, \right|  \bar \E  \right) + H(\one_\E) + 1 \nonumber \\
&\leq \PR{ \E } \log \T [\M^\beta+1,\M] \nonumber \\ 
&\quad \quad + \log \T [\M^\beta+1, (1-e^{-\cNM} + \delta)\M] + 2, \label{eq:rsbound}
}
where $\T[a,b]$ is the number of vectors $x \in \Z_+^a$ with $\| x \|_1 = b$.
An application of Lemma~\ref{lem:comb} yields 
\aln{
\log \T & [\M^\beta+1, (1-e^{-\cNM} + \delta)\M] \\
& \leq (1-e^{-\cNM} + \delta)\M \log \left( e + \frac{e M^{\beta-1}}{(1-e^{-\cNM} + \delta)} \right) \\ 
& \leq (1-e^{-\cNM} + \delta)\M \log \left( \alpha M^{\beta-1} \right) \\
& \leq (1-e^{-\cNM} + \delta)\M  [ (\beta-1) \log \M + \log \alpha ],
}
where $\alpha$ is a positive constant. 
Analogously, we obtain 
\aln{
\log \T [\M^\beta+1,\M]  \leq \M ( (\beta -1) \log \M + \log \alpha ). 
}
Dividing (\ref{eq:rsbound}) by $\M\len$ and applying the bounds above yields
\aln{
R_s (1 - P_e) &\leq \Pr( \E ) \frac{ \M [(\beta -1) \log \M + \log \alpha ]}{\M \len} \\
& -\hspace{-10mm} + \frac{( 1 - e^{-\cNM} + \delta)\M  [ (\beta-1) \log \M + \log \alpha ]}{\M \len} + \frac{2}{\M\len} \\
& \leq \Pr( \E ) \left( \frac{\beta - 1}{\beta} +\frac{\log \alpha}{\beta \log M}  \right) \\ 
& -\hspace{-10mm} + (1-e^{-\cNM} + \delta) \left(1-\frac{1}{\beta}  + \frac{\log \alpha}{\beta \log \M}\right)  + \frac{2}{\M\len}.
}
Finally, letting $\M \to \infty$ yields
\aln{
R_s \leq (1-e^{-\cNM} + \delta) \left(1- 1/\beta \right),
}
since $\Pr(\E) \to 0$ by Lemma~\ref{lem:conc}.

\section{Storage-Recovery Tradeoff} 
\label{sec:tradeoff}

Most studies on DNA-based storage emphasize the storage rate (or storage density), while sequencing  costs are disregarded.
From a practical point of view, it is important to understand, for a given storage rate, how much sequencing is required for reliable decoding, as this determines the time and cost required for retrieving the data.
Thus, characterizing the set of pareto-optimal points, or the optimal storage-recovery tradeoff, is of relevance.
From Theorem~\ref{thm:storagecap} and 
the fact that $R_s = \cNM R_r$, the $(R_s,R_r)$ feasibility region is as follows:

\begin{cor}
The rates $(R_s,R_r)$ are achievable if and only if, for some $c > 0$,
\al{
R_s  & \leq (1-e^{-\cNM})\left(1 - 1/\beta \right),\nonumber \\
R_r & \leq \frac{1-e^{-\cNM}}{\cNM}\left(1 - 1/\beta \right). 
\nonumber
}
\end{cor}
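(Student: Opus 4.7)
The plan is to derive both directions of the corollary directly from Theorem~\ref{thm:storagecap} via the definitional identity $R_s = c\, R_r$. This identity follows immediately from \eqref{eq:Rs} and \eqref{eq:Rr} together with $N = cM$: taking the ratio yields $R_s/R_r = N/M = c$. Once this identity is in hand, the corollary is essentially Theorem~\ref{thm:storagecap} rewritten in $(R_s, R_r)$ coordinates under the change of variables $c \mapsto (R_s, R_s/c)$.

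For the achievability direction, I would fix any $c > 0$ and invoke the achievability part of Theorem~\ref{thm:storagecap} to obtain a sequence of $(M,\len,N)$ codes with $N = cM$ whose storage rate approaches $(1-e^{-c})(1-1/\beta)$. The corresponding recovery rate is $R_r = R_s/c$, which therefore approaches $(1-e^{-c})(1-1/\beta)/c$. Any pair $(R_s, R_r)$ lying weakly below these two expressions for the \emph{same} value of $c$ is then achievable by restricting the message set of this family of codes (so that both inequalities hold with strict slack if needed).

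For the converse, I would take an arbitrary achievable pair $(R_s, R_r)$ realized by a sequence of codes $\{\C_\M\}$ with vanishing error probability. Since each code in the sequence operates at the fixed coverage depth $c = N/M$ of the asymptotic regime, the ratio $R_s/R_r$ equals exactly this $c$. Applying the converse part of Theorem~\ref{thm:storagecap} yields $R_s \leq (1-e^{-c})(1-1/\beta)$, and dividing through by $c$ produces the stated $R_r$-bound, placing $(R_s, R_r)$ in the claimed region for this particular $c$.

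There is no genuine obstacle here; the corollary is a mechanical reparametrization of Theorem~\ref{thm:storagecap}. The only edge case worth a brief comment is $R_s = 0$, in which case achievability is trivial and the inequalities hold for any $c > 0$, so the feasibility region is correctly closed.
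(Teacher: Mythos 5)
Your proposal is correct and follows exactly the route the paper itself indicates: the corollary is obtained by combining Theorem~\ref{thm:storagecap} with the identity $R_s = c R_r$, which is precisely the one-line justification the paper gives. No further comment is needed.
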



This region is illustrated in Fig.~\ref{fig:region}.
This tradeoff suggests that a good operating point would be achieved by not trying to maximize the storage rate (which technically requires $\cNM \to \infty$).
Instead, by using some modest coverage depth $\cNM=1,2,3$, most of the storage rate ($63 \%,86\%,95\%$, respectively) can be achieved.
This is somewhat in contrast to what has been done in the practical DNA storage systems that have been developed thus far, where the decoding phase utilizes very deep sequencing.

To be concrete, suppose that we are interested in minimizing the cost for storing data on DNA. 
Synthesis costs are currently larger than sequencing costs by about a factor $q = 10,000$-$100,000$. 
Thus, if our goal is to minimize the cost for synthesizing and sequencing a given number of bits in DNA, the cost is proportional to $q/R_s + 1/R_r = \frac{q+\cNM}{1-e^{-\cNM}}$. 
This quantity can be maximized over $\cNM$, to obtain the optimal cost per bit stored.
For example, for $q=10000$, $\cNM \approx 9.2$. 
Moreover, one might be interested in optimizing other quantities such as reading time or considering a scenario where the data is read more than once.

\begin{figure} [h]
	\center
       \includegraphics[width=5.6cm]{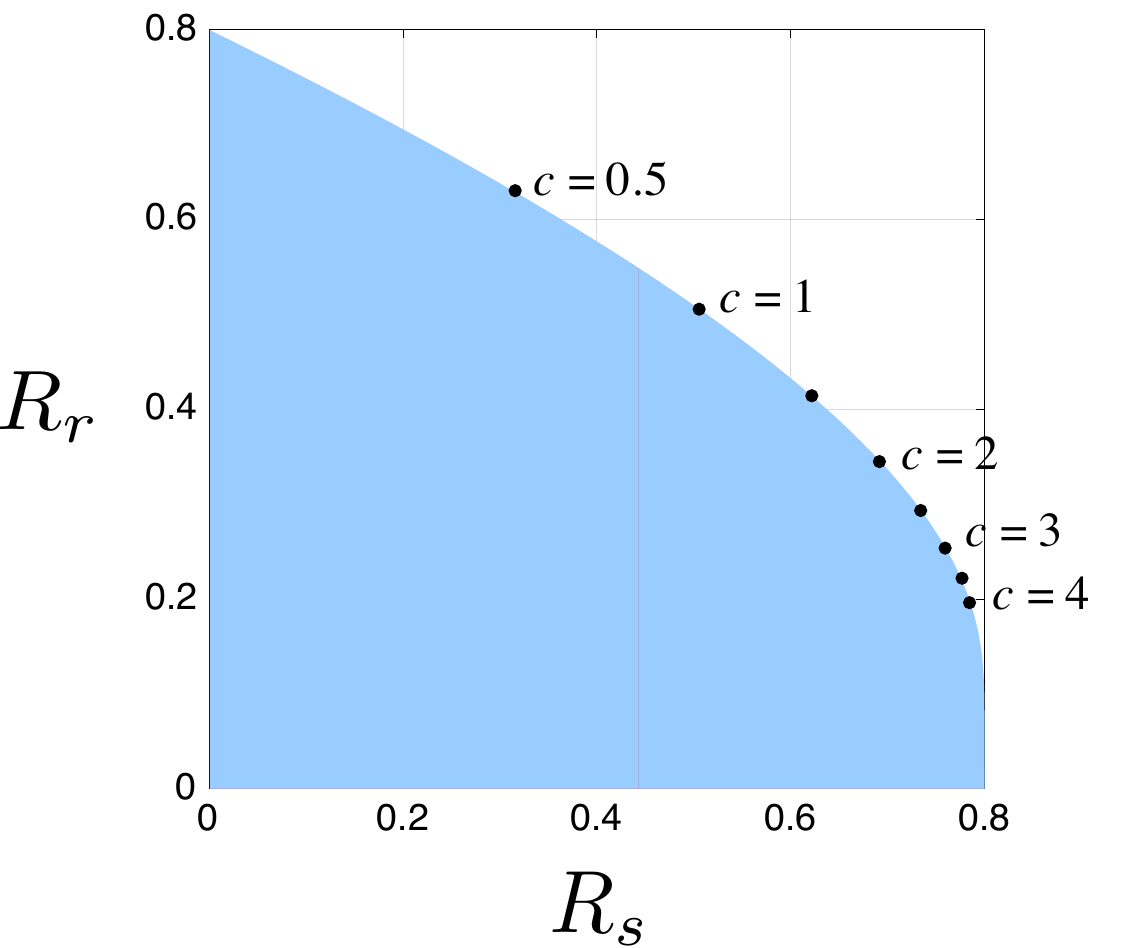} 
       \caption{$(R_s,R_r)$ feasibility region for $\beta = 5$. \label{fig:region}}
\end{figure}

\section{Concluding Remarks}
\label{sec:discussion}

In this paper we took initial steps in the study of the fundamental limits of DNA-based storage systems.
We proposed a simple model inspired by the current technological landscape of DNA synthesis and sequencing technologies.
Under this model, we showed that a simple index-based coding scheme is asymptotically optimal.

{\bf Model limitations: }
While our model captures key aspects of the writing and reading processes of DNA-based storage systems, several aspects are left out.
For example, the assumption that each molecule is written and read error-free, and is not damaged during storage is not realistic. 
Even though error-correcting codes can be used at the molecule block level, it is not clear whether a ``separation-based'' approach (with an inner and an outer code) is optimal. 
Moreover, the performance of the index-based coding scheme that achieves the storage capacity could be significantly affected, since errors in the indices would impact the ordering of the blocks, and the channel would no longer be effectively reduced to an erasure channel.

Moreover, the assumption that PCR amplifies all molecules by the same amount is also not realistic, in particular when the number of PCR cycles is large.
Hence, a more realistic model could assume the distribution of the molecules in the DNA pool to be non-uniform (for example Poisson).




{\bf Future work: }
Natural extensions of this work are to address the model limitations described above.
Another direction for further investigation relates to the asymptotic regime we considered.
Although our main results states that no rate is achievable when $\beta \leq 1$,
in practice we can store significant amounts of data even with very short DNA molecules. 
Hence, formulating a reasonable model to study the very-short-molecule regime would be of interest.
The regime of long molecules may also become practically relevant as synthesis technologies evolve.





\section*{Funding and Acknowledgements}
RH was supported by the Swiss National Science Foundation under grant P2EZP2\_159065.
IS and DT were supported by the Center for Science of Information, an NSF
Science and Technology Center, under grant agreement CCF-0939370.

Part of this work was completed while IS and DT were visiting the Simons Institute for the Theory of Computing, UC Berkeley, and IS was supported by a Simons Research Fellowship.
The authors would like to thank Olgica Milenkovic for helpful discussions on DNA-based storage during that time. 

Moreover, the authors thank Yaniv Erlich very much for helpful comments on a previous version of this manuscript, and Jossy Sayir for bringing our attention to the manuscript~\cite{mackay_near-capacity_2015}. 
Finally, RH would like to thank Robert Grass for helpful discussions. 



\printbibliography

\section{Appendix}

\subsection{Proof of Lemma~\ref{lem:comb} }
Notice that vectors  $x \in \Z_+^{a}$ with $\|x \|_1 = b$ are in one-to-one correspondence with 
binary strings containing $(a-1)$ $0$s and $b$ $1$s.
For $x = (x_1,...,x_a)$,
the corresponding string is 
\al{
\underbrace{1 \, \ldots \, 1}_{x_1} \, 0 \, \underbrace{1 \, \ldots \, 1}_{x_2} \, 0 \, \ldots \, 0 
\underbrace{1 \, \ldots \, 1}_{x_a}.
}
It is clear that such a string has $(a-1)$ $0$s and $b$ $1$s,
and that distinct strings with $(a-1)$ $0$s and $b$ $1$s correspond to distinct vectors $x$.
The number of distinct strings of this form is
\aln{
\frac{(a-1+b)!}{(a-1)!\, b!} = {a+b-1 \choose b}.
}
The upper bound in the statement of the lemma is a standard bound for binomial coefficients.


\subsection{Proof of Lemma~\ref{lem:conc}} 
The $\ell_1$ norm of the frequency vector $\vf$ at the output of the genie-aided channel is distributed as the number of distinct coupons obtained by drawing $\N = \cNM \M$ times with replacement from a set of $\M$ distinct coupons. 
Thus, Lemma~\ref{lem:conc} is an immediate consequence of the following stronger statement. 

\begin{lemma}
Let $Q$ be the number of distinct coupons obtained by drawing $\N = \cNM \M$ times with replacement from a set of $\M$ distinct coupons. 
We have that, for any $\delta > 0$,
\begin{align*}
\PR{Q \geq (1- e^{-\cNM} + \delta) \M}
\leq
\frac{1}{\M} \frac{2e^{2\cNM}}{2\left( \ln \left( \frac{e^{-\cNM}}{e^{-\cNM} - \delta} \right) - \frac{e^{\cNM}}{\M}\right)^2}.
\end{align*}
\end{lemma}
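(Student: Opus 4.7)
The natural handle on $Q$, the number of distinct coupons seen, is to dualize through waiting times. Let $T_k$ be the number of additional draws needed to collect the $k$-th distinct coupon given that $k-1$ have already been collected. Then the $T_k$ are independent geometric random variables with success probability $p_k = (M-k+1)/M$, mean $1/p_k = M/(M-k+1)$, and variance $(1-p_k)/p_k^2 = M(k-1)/(M-k+1)^2$. Setting $S_K \defeq T_1 + \cdots + T_K$, we have the identity $\{Q \geq K\} = \{S_K \leq N\}$. Thus, taking $K = \lceil (1-e^{-c}+\delta)M \rceil$, the lemma reduces to upper bounding $\Pr(S_K \leq N)$, which I would attack with Chebyshev's inequality applied to the \emph{sum} $S_K$ of independent variables.

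For Chebyshev to give the stated form, the key computation is to lower-bound $E[S_K] - N$. Writing $E[S_K] = M\sum_{j=M-K+1}^{M} 1/j$ and using the standard integral comparison $\sum_{j=a}^{b} 1/j \geq \ln\bigl((b+1)/a\bigr)$, I would obtain $E[S_K] \geq M\ln\bigl((M+1)/(M-K+1)\bigr)$. Substituting $M-K+1 \approx (e^{-c}-\delta)M$ and subtracting $N = cM$, the logarithm reorganizes into $\ln\bigl(e^{-c}/(e^{-c}-\delta)\bigr)$ after noting that $-c = \ln(e^{-c})$. The remaining $O(1/M)$ correction from the $+1$'s in numerator and denominator, together with the ceiling in the definition of $K$, produces precisely the $-e^{c}/M$ term inside the parentheses of the bound.

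Next I would bound the variance: $\mathrm{Var}(S_K) = \sum_{k=1}^K M(k-1)/(M-k+1)^2 \leq M^2 \sum_{j=M-K+1}^{M} 1/j^2 \leq M^2/(M-K)$ by the standard tail estimate $\sum_{j\geq a} 1/j^2 \leq 1/(a-1)$. Since $M-K \approx (e^{-c}-\delta)M$, this yields $\mathrm{Var}(S_K) \leq M/(e^{-c}-\delta)$, which I would further bound (crudely) by a multiple of $M e^{2c}$, matching the numerator in the statement. Chebyshev then gives
\[
\Pr(S_K \leq N) = \Pr\bigl(E[S_K] - S_K \geq E[S_K] - N\bigr) \leq \frac{\mathrm{Var}(S_K)}{(E[S_K]-N)^2},
\]
and plugging in the two estimates above delivers the claimed inequality.

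\textbf{Main obstacle.} The actual proof writing is routine, but the arithmetic is fussy: the harmonic sum approximation must be done on the correct side of the inequality (lower bound for $E[S_K]$, upper bound for $\mathrm{Var}(S_K)$), and the $e^c/M$ correction inside the logarithm emerges only after carefully tracking the discrepancy between $\sum_{j=a}^{b}1/j$ and $\ln(b/a)$ together with the $\lceil \cdot \rceil$ rounding. Choosing when to replace the cleaner $1/(e^{-c}-\delta)$ by the cruder $e^{2c}$ in the variance estimate (implicitly assuming $\delta$ is not too close to $e^{-c}$) is the other place where one has to make a deliberate choice to match the stated form.
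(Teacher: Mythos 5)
Your approach is essentially identical to the paper's: decompose $Q$ via coupon-collector waiting times, use the identity $\{Q\geq K\}=\{S_K\leq N\}$, and apply Chebyshev to the sum of independent geometrics, with the same two key estimates (lower bound on $E[S_K]-N$ via a harmonic-sum approximation yielding $M\xi - O(e^c)$, and an upper bound on $\mathrm{Var}(S_K)$ of order $Me^{2c}$). The paper's harmonic-sum control goes through the Euler--Maclaurin bound $0\leq H_n-\ln n-\gamma\leq \tfrac{1}{2n}$ rather than your integral comparison, but these are interchangeable and lead to the same $-e^{c}/M$ correction.

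The one place where you leave a real loose end is exactly the one you flag yourself: replacing $1/(e^{-c}-\delta)$ (and likewise $1/(e^{-c}-\delta)^2$) by a constant multiple of $e^{c}$ (resp.\ $e^{2c}$) is not valid uniformly in $\delta$, since these blow up as $\delta\uparrow e^{-c}$. The paper closes this by first observing that $\PR{Q\geq (1-e^{-c}+\delta)M}$ is non-increasing in $\delta$, so one may restrict to $\delta\in(0,e^{-c}/2]$; on that range $1-\alpha = e^{-c}-\delta\geq e^{-c}/2$, giving $\tfrac{1}{1-\alpha}\leq 2e^{c}$ and hence $\mathrm{Var}(T)\leq 2Me^{2c}$ cleanly. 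You should make that preliminary reduction explicit rather than treating it as an implicit assumption.
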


\begin{proof}
Since $\PR{Q \geq (1- e^{-\cNM} + \delta) \M}$ is a non-increasing function of $\delta$, we can assume that $\delta \in (0, e^{-\cNM}/2]$, as that simplifies the expressions.
Let $t_i$ be the number of draws to collect the $i$-th coupon after $(i-1)$ coupons have been collected, $i=0,\ldots,\M-1$, and consider the number of draws for obtaining $\alpha\M$ distinct coupons 
$
T \defeq \sum_{i=0}^{\alpha \M-1} t_i$
where
$\alpha \defeq 1 - e^{-\cNM} + \delta.
$
Due to
\[
\PR{Q \geq (1- e^{-\cNM} + \delta) \M}
=
\PR{Q \geq \alpha \M}
= 
\PR{T \leq \N}, 
\]
the lemma will follow by upper-bounding $\PR{T \leq \N}$ using Chebyshev's inequality. 
We first note that with 
$\EX{t_i} = 1/p_i, p_i \defeq \frac{\M-i}{\M}$ and
$\Var{t_i} = \frac{1-p_i}{p_i^2}$, 
we obtain 
\begin{align}
\EX{T} &
=
\sum_{i=0}^{\alpha \M-1} \EX{t_i}
= 
\M \sum_{i=0}^{\alpha \M-1} \frac{1}{\M-i}  \nonumber \\
&= 
\M(H_{\M} - H_{\M (1-\alpha) }) \nonumber \\
&\geq
\M( \ln \M - \ln (\M (1-\alpha)) ) - \frac{1}{2(1-\alpha)} \nonumber  \\
&\geq
-\M \ln(1-\alpha) - e^{\cNM}
= -\M \ln(e^{-\cNM} - \delta) - e^{\cNM} \nonumber \\
&= 
\M \cNM 
+ \M \underbrace{
\ln \left( \frac{e^{-\cNM}}{e^{-\cNM} - \delta} \right)
}_{\xi} - e^{\cNM}
= \N + \M \xi - e^{\cNM}. \nonumber
\end{align}
Here, $H_\M = \sum_{i=1}^\M \frac{1}{i}$ is the $\M$-th harmonic number, and the first inequality follows by the asymptotic expansion
\[
0 \leq H_n - \ln n - \gamma 
= 
\frac{1}{2n}  - \frac{1}{12 n^2} +  \frac{1}{120 n^4}  - \ldots 
\leq \frac{1}{2n},
\]
where $\gamma$ is the Euler-Mascheroni constant. 
The second inequality follows from $\frac{1}{1 - \alpha} \leq \frac{1}{e^{-\cNM} - e^{-\cNM}/2 } = 2 e^{\cNM}$. 
Moreover, the variance can be upper-bounded as
\begin{align}
\Var{T}
&=
\sum_{i=0}^{\alpha \M-1} \Var{t_i}
=
\sum_{i=0}^{\alpha \M-1} \frac{i\M}{(\M-i)^2} \nonumber \\
&
\leq 
\M \frac{\alpha}{2 (1-\alpha)^2}
\leq 
\M 2e^{2\cNM}.
\label{eq:boundvar}
\end{align}
Using the bound on the expectation and Chebyshev's inequality, we have for any $\beta>0$, that
\begin{align}
&\PR{ -T  + \N + \M \xi - e^{\cNM}  > \beta } \nonumber \\
&\hspace{2cm}\leq
\PR{ -T  + \EX{T}  > \beta }
\leq \frac{\Var{T}}{\beta^2}. \nonumber
\end{align}
Choosing $\beta = \M \xi - e^{\cNM}$ and using the upper bound on $\Var{T}$ given in~\eqref{eq:boundvar}, yields 
$
\PR{T \leq \N}
%
%
\leq 
\frac{1}{\M} \frac{2e^{2\cNM}}{\left(\xi - \frac{e^{\cNM}}{\M}\right)^2},
$
which concludes the proof. 
\end{proof}



\end{document}